\documentclass[10pt]{article}

\usepackage[a4paper,margin=2.4cm]{geometry}
\usepackage[T1]{fontenc}
\usepackage{lmodern}
\usepackage{microtype}
\usepackage{amsmath,amssymb,amsthm,mathtools}
\usepackage{dsfont}
\usepackage[colorlinks=true,linkcolor=black,citecolor=black,urlcolor=black]{hyperref}

\newtheorem{theorem}{Theorem}[section]
\newtheorem{proposition}[theorem]{Proposition}
\theoremstyle{remark}
\newtheorem{remark}[theorem]{Remark}
\newcommand{\Tr}{\operatorname{Tr}}
\newcommand{\id}{\mathds{1}}
\newcommand{\Hmin}{H_{\min}}
\newcommand{\SN}{\operatorname{SN}}
\newcommand{\norm}[1]{\left\lVert #1\right\rVert}
\newcommand{\abs}[1]{\left\lvert #1\right\rvert}

\title{Sharp continuity of quantum conditional entropy}
\author{%
 Mario Berta$^{1}$, Pablo Costa Rico$^{1}$, Gereon Kossmann$^{1}$,\\
 Ludovico Lami$^{2}$, and Julius A.\ Zeiss$^{1}$\\[0.8em]
 \parbox{0.92\textwidth}{\centering\small
 $^{1}$Institute for Quantum Information, RWTH Aachen University,
 Aachen, Germany\\
 $^{2}$Scuola Normale Superiore, Pisa, Italy}}
\date{}
\hypersetup{
 pdftitle={Sharp continuity of quantum conditional entropy},
 pdfauthor={Mario Berta, Pablo Costa Rico, Gereon Kossmann,
 Ludovico Lami, Julius A. Zeiss}
}

\begin{document}
\maketitle

\begin{abstract}
We prove the sharp uniform continuity bound for quantum conditional
entropy.  If two bipartite states are at trace distance at most $\delta$
and $d=\dim A$, the optimal dimension-only modulus of continuity is
$h_2(\delta)+\delta\log(d^2-1)$ up to $\delta=1-d^{-2}$ and $2\log d$
thereafter, where $h_2$ denotes the binary entropy.  When $\dim B\ge d$,
this bound is tight for every $\delta\in[0,1]$.  The key proof idea was
developed with the assistance of
ChatGPT 5.6 Sol,
building on and adapting the tight classical proof of Alhejji \& Smith
[IEEE ISIT (2020)], which follows a conceptually different approach.
\end{abstract}

\section*{Introduction}
The first uniform continuity estimate for quantum conditional entropy whose modulus depends only on the dimension of the conditioned system $A$, and not on that of the conditioning system $B$, was proved by Alicki and Fannes
\cite{AlickiFannes2004} and later strengthened by Winter
\cite{Winter2016}. Alhejji and Smith then determined the sharp
classical modulus of continuity \cite{AlhejjiSmith2020}. Wilde proved the
corresponding bound when the conditioned system is quantum and the
conditioning system is classical, and conjectured the fully quantum
modulus of continuity with $\log(d^2-1)$, where $d=\dim A$ \cite{Wilde2020}.  Berta,
Lami, and Tomamichel \cite{BertaLamiTomamichel2025} and, independently,
Audenaert, Bergh, Datta, Jabbour, Capel, and Gondolf
\cite{AudenaertEtAl2025} established the conjecture when the
two states have the same $B$-marginal and left the unrestricted case
open.
Theorem~\ref{thm:main} settles this problem.  Appendix~\ref{app:classical}
gives a global classical proof without the conditional-fibre decomposition
used by Alhejji and Smith, while Appendix~\ref{app:winter} compares the
argument with Winter's proof.\\

\textit{We would like to highlight again that this manuscript and its content were 
developed with the assistance of ChatGPT~5.6 Sol, a generative AI tool. 
The manuscript aims to present a solution to an open problem in quantum 
information theory, which we believe is of interest to a significant part 
of our community. Moreover, it may stimulate the recently initiated 
discussion about the use of AI-generated scientific content and the 
appropriate format for its communication.}

\section{Main result}

All systems are finite dimensional and all logarithms are natural.  For a
state $\omega_{AB}$, write
\[
 H(\omega_{AB})=-\Tr\,\omega_{AB}\log\omega_{AB},\qquad
 H(A|B)_\omega=H(\omega_{AB})-H(\omega_B),
\]
\[
 T(\rho_{AB},\sigma_{AB})
 =\frac12\norm{\rho_{AB}-\sigma_{AB}}_1,\qquad
 h_2(t)=-t\log t-(1-t)\log(1-t).
\]
Here $0\log0\coloneqq 0$, and
\[
 D(\omega_{AB}\|\eta_{AB})
 \coloneqq \Tr\,\omega_{AB}(\log\omega_{AB}-\log\eta_{AB})
\]
denotes quantum relative entropy, with the usual support convention.

\begin{theorem}[Sharp uniform continuity bound]\label{thm:main}
Let $\rho_{AB}$ and $\sigma_{AB}$ be states, let $d=\dim A\ge2$, and suppose
\[
 T(\rho_{AB},\sigma_{AB})\le\delta\le1.
\]
Then
\[
 \boxed{\;
 \abs{H(A|B)_\rho-H(A|B)_\sigma}
 \le
 \begin{cases}
  h_2(\delta)+\delta\log(d^2-1),
      &0\le\delta\le1-d^{-2},\\[1mm]
  2\log d,
      &1-d^{-2}\le\delta\le1.
 \end{cases}\;}
\]
If $\dim B\ge d$, the right-hand side is optimal for every
$\delta\in[0,1]$.
\end{theorem}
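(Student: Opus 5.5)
Our plan follows Winter's coupling argument, with the mixing step modified so that the weight $\delta$ is spread over the $d^2-1$ directions orthogonal to a fixed state on $A$. This is the mechanism behind the $\log(d^2-1)$ term. \textbf{Reductions.} By the triangle inequality in $\delta$ and concavity, we may assume $T(\rho,\sigma)=\delta$ exactly. Write $\rho-\sigma=\delta(\Delta_+-\Delta_-)$, where $\Delta_\pm$ are states with orthogonal supports. Set
\[
\omega=\rho+\delta\Delta_- \;(=\sigma+\delta\Delta_+),\qquad \omega=(1+\delta)\tilde\omega .
\]
Then
\[
\tilde\omega=\tfrac{1}{1+\delta}\rho+\tfrac{\delta}{1+\delta}\Delta_-=\tfrac{1}{1+\delta}\sigma+\tfrac{\delta}{1+\delta}\Delta_+ .
\]
Concavity of conditional entropy and the almost-convexity bound $H(A|B)_{\sum p_i\omega_i}\le\sum p_iH(A|B)_{\omega_i}+h(p)$ (a classical flag on $B$) give
\[
H(A|B)_\rho-H(A|B)_\sigma\le \delta\bigl(H(A|B)_{\Delta_-}-H(A|B)_{\Delta_+}\bigr)+(1+\delta)h_2\!\left(\tfrac{\delta}{1+\delta}\right).
\]
This is Winter's bound $2\delta\log d+(1+\delta)h_2(\delta/(1+\delta))$. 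It is not sharp, so the flag step must be replaced.

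\textbf{Sharp step.} Instead of comparing with $\Delta_\pm$ directly, we will use the variational form
\[
-H(A|B)_\omega=\min_{\tau_B}D(\omega_{AB}\|\id_A\otimes\tau_B).
\]
Pick the optimal $\tau_B$ for $\sigma$, so that $-H(A|B)_\sigma=D(\sigma\|\id\otimes\tau)$. Then
\[
H(A|B)_\sigma-H(A|B)_\rho\le D(\rho\|\id\otimes\tau)-D(\sigma\|\id\otimes\tau).
\]
This reduces the problem to continuity of $D(\cdot\|\id\otimes\tau)$ with a \emph{fixed} second argument, in the style of Alhejji--Smith. We then adapt the classical idea: choose a coupling between $\rho$ and $\sigma$ via a classical register $X$ with $\Pr[X=1]=\delta$ recording ``disagreement'', so that $\rho=(1-\delta)\gamma+\delta\Delta_+$ and $\sigma=(1-\delta)\gamma+\delta\Delta_-$. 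Next, bound the disagreement contribution by
\[
\delta\bigl[D(\Delta_+\|\id\otimes\tau)-D(\Delta_-\|\id\otimes\tau)\bigr]+h_2(\delta),
\]
using joint convexity together with the chain rule for the flagged extension.

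The key claim is that, because $\Delta_+\perp\Delta_-$, this difference is at most $\log(d^2-1)$. For this we would use the fact that $-H(A|B)\le\log d$ and $H(A|B)\le\log d$ cannot both be extremal on orthogonal states sharing the same $B$-weighting. The extremal configuration is a maximally entangled $\Delta_-$ against $\Delta_+$ supported on its orthogonal complement, whose maximal $A|B$ entropy relative to $\id\otimes\tau$ with $\tau=\id/d$ gives $\log(d^2-1)-\log d$ on one side and $\log d$ on the other. Concretely, we would prove the inequality
\[
\Tr\,\Delta_+\log(\id\otimes\tau)\ \ge\ \text{constraint from }\Delta_+\le \id-\Pi_{\Delta_-}
\]
and use that the spectrum of $\id_A\otimes\tau_B-$ (the $\Delta_-$ part) has total weight $d-\tfrac1d$ per unit of $\tau$, i.e.\ $(d^2-1)/d$. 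The normalization $\Tr_A\id=d$ loses exactly the $1/d$ of weight sitting under a rank-one projection. We expect this step, getting $\log(d^2-1)$ rather than $2\log d$ when $B$-marginals differ, to be the main obstacle. It is exactly where the equal-marginal proofs of Berta--Lami--Tomamichel and Audenaert et al.\ stopped. Our first attempt would be an operator inequality $\Tr_A P\le(1-1/d)\,\cdot$ for projections $P$ orthogonal to a state, combined with the optimization over $\tau$.

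\textbf{Saturation and optimality.} For $\delta\ge1-d^{-2}$ the function $h_2(\delta)+\delta\log(d^2-1)$ reaches its maximum $2\log d$ at $\delta=1-d^{-2}$. The second case then follows from the trivial bound $|H(A|B)|\le\log d$, which gives $2\log d$. For tightness with $\dim B\ge d$, take $\rho=\Phi_{AB}$ maximally entangled, so that $H(A|B)_\rho=-\log d$, and
\[
\sigma=(1-\delta)\Phi+\tfrac{\delta}{d^2-1}(\id-\Phi)
\]
on a $d\times d$ subspace. Then $T=\delta$ and $\sigma_B=\id/d$, and a direct computation gives
\[
H(A|B)_\sigma=h_2(\delta)+\delta\log(d^2-1)-\log d,
\]
so the difference equals the bound. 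For $\delta\ge1-d^{-2}$, take $\sigma=\id/d^2$, which attains $2\log d$.
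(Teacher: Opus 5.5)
\textbf{The gap is the ``sharp step''.} Your reduction to $T(\rho_{AB},\sigma_{AB})=\delta$ is fine; it really rests on $h_2(\delta)+\delta\log(d^2-1)$ being nondecreasing on $[0,1-d^{-2}]$. The plateau branch and the optimality examples are also correct; they are the paper's construction with $\rho$ and $\sigma$ interchanged. The key claim, however, is left as an expectation, and the framework you set up for it fails for two independent reasons.

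First, fixing $\tau_B=\sigma_B$ in the variational formula is too lossy. Since $D(\rho_{AB}\|\id_A\otimes\sigma_B)=-H(A|B)_\rho+D(\rho_B\|\sigma_B)$, your right-hand side equals $H(A|B)_\sigma-H(A|B)_\rho+D(\rho_B\|\sigma_B)$. The extra term $D(\rho_B\|\sigma_B)$ is not controlled by the trace distance. For example, take $\sigma_{AB}=\frac{\id_A}{d}\otimes|0\rangle\langle0|$ and $\rho_{AB}=\frac{\id_A}{d}\otimes\bigl((1-\delta)|0\rangle\langle0|+\delta|1\rangle\langle1|\bigr)$. These have $T=\delta$ and equal conditional entropies, yet your right-hand side is $+\infty$. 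Correspondingly, $D(\Delta_+\|\id_A\otimes\sigma_B)-D(\Delta_-\|\id_A\otimes\sigma_B)$ contains $-\Tr[(\Delta_{+,B}-\Delta_{-,B})\log\sigma_B]$. That term is unbounded already for commuting (classical) states, so no argument based on $\Delta_+\perp\Delta_-$ can cap it by $\log(d^2-1)$.

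Second, the coupling $\rho=(1-\delta)\gamma+\delta\Delta_+$, $\sigma=(1-\delta)\gamma+\delta\Delta_-$ with $\gamma\ge0$ does not exist for noncommuting states, because $\rho-\delta\Delta_+$ need not be positive. For two distinct nonorthogonal pure states, any common component $\gamma$ would have to coincide with both states. So no decomposition with disagreement weight below $1$ exists at all.

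\textbf{The missing idea} is to compare against a genuine bipartite state with the same $B$-marginal as $\sigma$, rather than against the operator $\id_A\otimes\sigma_B$. That way the marginal term is cancelled instead of incurred. The paper sets $\tau_{AB}=(1-\delta)\sigma_{AB}+\delta\widehat\sigma_{AB}$ with $\widehat\sigma_{AB}=(d\,\id_A\otimes\sigma_B-\sigma_{AB})/(d^2-1)$. This is a state by the pinching bound $\sigma_{AB}\le d\,\id_A\otimes\sigma_B$, and that bound, not an orthogonality argument, is the source of $d^2-1$. With $G=-\log\tau_{AB}+\id_A\otimes\log\sigma_B$ one has the exact identity
\[
 H(A|B)_\rho-H(A|B)_\sigma=\Tr[(\rho-\sigma)G]+D(\sigma\|\tau)-D(\rho\|\tau)+D(\rho_B\|\sigma_B).
\]
Because $\tau_B=\sigma_B$, data processing gives $D(\rho_B\|\sigma_B)\le D(\rho\|\tau)$, which removes the marginal term that ruins your approach.

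The sandwich $\frac{d\delta}{d^2-1}\id_A\otimes\sigma_B\le\tau_{AB}\le d(1-\delta)\id_A\otimes\sigma_B$ confines the spectrum of $G$ to an interval of length $\log\frac{(d^2-1)(1-\delta)}{\delta}$. The Jordan decomposition alone then gives $\Tr[(\rho-\sigma)G]\le\delta\log\frac{(d^2-1)(1-\delta)}{\delta}$. Separately, $\tau_{AB}\ge(1-\delta)\sigma_{AB}$ gives $D(\sigma\|\tau)\le-\log(1-\delta)$. These two terms sum to exactly $h_2(\delta)+\delta\log(d^2-1)$, and no coupling of $\rho$ and $\sigma$ is ever needed.
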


\begin{proof}
It suffices to prove the bound without the absolute value; the reverse
inequality follows by interchanging $\rho_{AB}$ and $\sigma_{AB}$.  The
case $\delta=0$ is immediate, while the second branch follows from
$-\log d\le H(A|B)\le\log d$.  It remains to prove the first branch, so suppose
$0<\delta\le1-d^{-2}$.  Replacing $\rho_{AB}$ and $\sigma_{AB}$ by
$\rho_{\varepsilon,AB}=(1-\varepsilon)\rho_{AB}
+\varepsilon\id_{AB}/\dim(A\otimes B)$
and $\sigma_{\varepsilon,AB}=(1-\varepsilon)\sigma_{AB}+
\varepsilon\id_{AB}/\dim(A\otimes B)$ preserves the trace-distance
hypothesis; continuity as $\varepsilon\downarrow0$ therefore allows us to
assume that both states are strictly positive.

The pinching inequality in any orthonormal basis
$\{\lvert i\rangle\}_{i=1}^d$ of $A$ gives, for every $X_{AB}\ge0$,
\begin{equation}\label{eq:domination}
 X_{AB}\le
 d\sum_{i=1}^d
 (\lvert i\rangle\!\langle i\rvert\otimes\id_B)
 X_{AB}
 (\lvert i\rangle\!\langle i\rvert\otimes\id_B)
 \le d\,\id_A\otimes X_B .
\end{equation}
Define the states
\begin{equation}\label{eq:states}
 \widehat{\sigma}_{AB}
 \coloneqq \frac{d\,\id_A\otimes\sigma_B-\sigma_{AB}}{d^2-1},
 \qquad
 \tau_{AB}\coloneqq (1-\delta)\sigma_{AB}
              +\delta\widehat{\sigma}_{AB}.
\end{equation}
Indeed, \eqref{eq:domination} shows that $\widehat{\sigma}_{AB}\ge0$;
its trace is one and $\widehat{\sigma}_B=\sigma_B$, so
$\tau_B=\sigma_B$.  Moreover,
\[
 \tau_{AB}
 =\left(1-\frac{d^2\delta}{d^2-1}\right)\sigma_{AB}
  +\frac{d\delta}{d^2-1}\,\id_A\otimes\sigma_B .
\]
Since the first coefficient is nonnegative, we  obtain the lower bound
\[
    \tau_{AB}\ge \frac{d\delta}{d^2-1}\,\id_A\otimes\sigma_B.
\]
On the other hand, \eqref{eq:domination} yields
\[
    \tau_{AB}\le \left(1-\frac{d^2\delta}{d^2-1}\right)d \, \id_A\otimes \sigma_B+\frac{d\delta}{d^2-1}\,\id_A\otimes\sigma_B
 = d(1-\delta)\,\id_A\otimes\sigma_B\, ,
\]
and we obtain
\begin{equation}\label{eq:sandwich}
 \frac{d\delta}{d^2-1}\,\id_A\otimes\sigma_B
 \le\tau_{AB}
 \le d(1-\delta)\,\id_A\otimes\sigma_B .
\end{equation}

Put
\[
 G_{AB}\coloneqq -\log\tau_{AB}+\id_A\otimes\log\sigma_B .
\]
Direct expansion gives
\begin{align*}
 H(A|B)_\rho-H(A|B)_\sigma
 &=
 \Tr[(\rho_{AB}-\sigma_{AB})G_{AB}]
 +D(\sigma_{AB}\|\tau_{AB})-D(\rho_{AB}\|\tau_{AB})
 +D(\rho_B\|\sigma_B)\\
 &\le \Tr[(\rho_{AB}-\sigma_{AB})G_{AB}]
       +D(\sigma_{AB}\|\tau_{AB}),
\end{align*}
where the inequality is data processing of relative entropy under the channel implementing the partial trace over $A$, using $\tau_B=\sigma_B$.

By operator monotonicity of the logarithm, \eqref{eq:sandwich} implies
\[
\log\left(\frac{d\delta}{d^2-1}\right)\id_{AB}+\id_A \otimes \log \sigma_B\le \log \tau_{AB}\le \log(d(1-\delta))\id_{AB}+\id_A \otimes \log \sigma_B \, .
\]
Multiplying by $-1$ and adding $\id_A\otimes \log \sigma_B$, we obtain
\[
 -\log\!\bigl(d(1-\delta)\bigr)\id_{AB}
 \le G_{AB}\le
 -\log\!\left(\frac{d\delta}{d^2-1}\right)\id_{AB}.
\]
If $t=T(\rho_{AB},\sigma_{AB})$, the Jordan decomposition of the
traceless operator $\Delta_{AB}\coloneqq \rho_{AB}-\sigma_{AB}$ is given by $\Delta_{AB}=t(\Delta_{+,AB}-\Delta_{-,AB})$, where $\Delta_{\pm,AB}\geq 0$, $\Delta_{+,AB}\Delta_{-,AB}=0$ and $\Tr[\Delta_{+,AB}]=\Tr[\Delta_{-,AB}]=1$. Then,
\[
\Tr[\Delta_{AB}G_{AB}]=\Tr[\Delta_{+,AB}G_{AB}]-\Tr[\Delta_{-,AB} G_{AB}]\le -\log\left(\frac{d\delta}{d^2-1} \right)\Tr[\Delta_{+,AB}]+\log(d(1-\delta))\Tr[\Delta_{-,AB}]
\]
which gives
\[
 \Tr[(\rho_{AB}-\sigma_{AB})G_{AB}]
 \le t\log\frac{(d^2-1)(1-\delta)}{\delta}
 \le\delta\log\frac{(d^2-1)(1-\delta)}{\delta},
\]
where the last logarithm is nonnegative  in the assumed range.  Finally,
$\tau_{AB}\ge(1-\delta)\sigma_{AB}$ implies
$D(\sigma_{AB}\|\tau_{AB})\le-\log(1-\delta)$.  Consequently,
\[
 H(A|B)_\rho-H(A|B)_\sigma
 \le
 \delta\log\frac{(d^2-1)(1-\delta)}{\delta}
 -\log(1-\delta)
 =h_2(\delta)+\delta\log(d^2-1).
\]

For sharpness, let $B_0\subseteq B$ have dimension $d$, let
$\Phi_{AB}$ be a maximally entangled pure state on $A\otimes B_0$, and set
$s=\min\{\delta,1-d^{-2}\}$.  The states
\[
 \sigma_{AB}=\Phi_{AB},\qquad
 \rho_{AB}=(1-s)\Phi_{AB}
      +\frac{s}{d^2-1}(\id_{A\otimes B_0}-\Phi_{AB})
\]
satisfy $T(\rho_{AB},\sigma_{AB})=s\le\delta$, have the same
$B$-marginal, and obey
\[
 H(A|B)_\rho-H(A|B)_\sigma
 =H(\rho_{AB})=h_2(s)+s\log(d^2-1).
\]
This is the first branch when $\delta\le1-d^{-2}$ and equals
$2\log d$ when $\delta\ge1-d^{-2}$.
\end{proof}

\begin{remark}\label{rem:weyl}
For a Weyl unitary error basis
$\{W_{jk}\}_{j,k=0}^{d-1}$ on $A$, we have
\[
 \widehat{\sigma}_{AB}
 =\frac1{d^2-1}\sum_{(j,k)\ne(0,0)}
 (W_{jk}\otimes\id_B)\sigma_{AB}
 (W_{jk}^{\dagger}\otimes\id_B).
\]
Thus the comparison state in the proof is obtained by hedging the anchor
$\sigma_{AB}$ toward the uniform mixture of its $d^2-1$ nonidentity
Weyl-error sectors.
\end{remark}

\section{Refinements}
\label{sec:refinements}

Let $m=\dim B$, $r=\min\{d,m\}$, and
$\pi_A=\id_A/d$.  For $K>1$, define the nondecreasing function
\begin{equation}\label{eq:gK}
 g_K(\delta)\coloneqq 
 \begin{cases}
  h_2(\delta)+\delta\log(K-1),
      &0\le\delta\le1-K^{-1},\\[1mm]
  \log K,
      &1-K^{-1}\le\delta\le1,
 \end{cases}
 \qquad g_1\equiv0.
\end{equation}
For a state $\omega_{AB}$, define its conditional min-entropy relative to
its own marginal $\omega_B$ by
\[
 \Hmin(A|B)_{\omega\mid\omega}
 \coloneqq -\log\inf\{\lambda>0:
       \omega_{AB}\le\lambda\,\id_A\otimes\omega_B\},
\]
and set
\[
 \kappa_\omega
 \coloneqq d\,\exp\!\left[-\Hmin(A|B)_{\omega\mid\omega}\right].
\]
The mixed-state Schmidt number $\SN(\omega_{AB})$ is the least integer
$s$ for which $\omega_{AB}$ has a convex decomposition into pure states
of Schmidt rank at most $s$ \cite{TerhalHorodecki2000}.

\begin{proposition}[Hierarchy and exact Schmidt-number interpolation]
\label{prop:meta}
Let $\rho_{AB}$ and $\sigma_{AB}$ be states with
$T(\rho_{AB},\sigma_{AB})\le\delta\le1$.  Then
\begin{equation}\label{eq:state-dependent}
 H(A|B)_\rho-H(A|B)_\sigma
 \le
 \min\left\{
  g_{\kappa_\sigma}(\delta),\,
  \log d-H(A|B)_\sigma
 \right\},
\end{equation}
and
\begin{equation}\label{eq:hierarchy}
 1\le\kappa_\sigma
 =d\,\exp\!\left[-\Hmin(A|B)_{\sigma\mid\sigma}\right]
 \le d\,\SN(\sigma_{AB})
 \le dr
 \le d^2.
\end{equation}
Consequently,
\begin{equation}\label{eq:modulus-hierarchy}
 H(A|B)_\rho-H(A|B)_\sigma
 \le g_{\kappa_\sigma}(\delta)
 \le g_{d\,\SN(\sigma_{AB})}(\delta)
 \le g_{dr}(\delta)
 \le g_{d^2}(\delta).
\end{equation}
Moreover, for every integer $1\le s\le r$,
\begin{equation}\label{eq:exact-SN}
 \sup_{\substack{T(\rho_{AB},\sigma_{AB})\le\delta\\
                  \SN(\rho_{AB}),\,\SN(\sigma_{AB})\le s}}
 \abs{H(A|B)_\rho-H(A|B)_\sigma}
 =g_{ds}(\delta).
\end{equation}
Thus both the classical and the separable cases have the sharp modulus of continuity
$g_d$; Schmidt number at most $s$ gives the interpolation $g_{ds}$; and
the unrestricted quantum case has the sharp modulus of continuity $g_{dr}$.  In
particular, if $m\ge d$, then $g_{dr}=g_{d^2}$, which reaches its plateau value $2\log d$ at $\delta=1-d^{-2}$.
\end{proposition}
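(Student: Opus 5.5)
The plan is to rerun the proof of Theorem~\ref{thm:main} with the universal constant $d$ in \eqref{eq:domination} replaced by the state-dependent constant $\lambda_\sigma\coloneqq\exp[-\Hmin(A|B)_{\sigma\mid\sigma}]$, so that $\sigma_{AB}\le\lambda_\sigma\,\id_A\otimes\sigma_B$ and $\kappa_\sigma=d\lambda_\sigma$. As in the theorem, perturb first so that both states are strictly positive. The definition of $\kappa_\sigma$ is not continuous under this perturbation in general, so I would perturb only $\rho$, or check upper semicontinuity of $\kappa$ along the mixture with the maximally mixed state. If $\kappa_\sigma=1$, then $\sigma_{AB}=\pi_A\otimes\sigma_B$, and the claimed bound reduces to $\log d-H(A|B)_\sigma=0$ together with $H(A|B)\le\log d$. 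Otherwise $K\coloneqq\kappa_\sigma>1$, and I define
\[
 \widehat\sigma_{AB}=\frac{\lambda_\sigma\id_A\otimes\sigma_B-\sigma_{AB}}{K-1}\ge0,\qquad \tau_{AB}=(1-\delta)\sigma_{AB}+\delta\widehat\sigma_{AB}.
\]
The same algebra gives the sandwich
\[
 \frac{\lambda_\sigma\delta}{K-1}\,\id_A\otimes\sigma_B\le\tau_{AB}\le\lambda_\sigma(1-\delta)\,\id_A\otimes\sigma_B
\]
for $\delta\le1-K^{-1}$, and it still holds that $\tau_B=\sigma_B$. The Jordan-decomposition estimate on $G_{AB}$ then gives $\delta\log\frac{(K-1)(1-\delta)}{\delta}$, and adding $-\log(1-\delta)$ yields $g_K(\delta)$. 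For $\delta\ge1-K^{-1}$, I use that $g_K$ is nondecreasing together with an independent plateau bound, $H(A|B)_\rho-H(A|B)_\sigma\le\log d+\Hmin(A|B)_{\sigma\mid\sigma}+\log\lambda_\sigma\cdot0$; more precisely, $H(A|B)_\sigma\ge\Hmin(A|B)_{\sigma\mid\sigma}=\log d-\log K$. This is checked by operator monotonicity of the log: $\log\sigma_{AB}\le\log\lambda_\sigma+\id_A\otimes\log\sigma_B$, since $\sigma_{AB}$ and $\id\otimes\sigma_B$ need not commute. Combined with $H(A|B)_\rho\le\log d$, it gives both the $\log K$ plateau and the second term in \eqref{eq:state-dependent}.

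For \eqref{eq:hierarchy}, $\kappa_\sigma\ge1$ follows by tracing $\sigma_{AB}\le\lambda\id_A\otimes\sigma_B$ over $AB$, which gives $\lambda d\ge1$. For $\lambda_\sigma\le\SN(\sigma)$, I would write $\sigma=\sum_x p_x\psi_x$ with Schmidt ranks at most $s$. A pure state of Schmidt rank $s$ satisfies $\psi\le s\,\id_A\otimes\psi_B$ (Cauchy--Schwarz on the Schmidt decomposition: $\psi\le s\,P_A\otimes\psi_B$ with $P_A$ the support projection), and summing gives $\sigma\le s\,\id_A\otimes\sigma_B$. Then $\SN\le r$ is trivial, and $dr\le d^2$. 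Monotonicity of $K\mapsto g_K(\delta)$ is elementary: the binary-entropy expression increases in $K$, and the plateau is handled by the maximum of $h_2(\delta)+\delta\log(K-1)$ over $\delta$ being $\log K$. This gives \eqref{eq:modulus-hierarchy}.

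For \eqref{eq:exact-SN}, the upper bound follows from the above applied in both directions. For tightness, I take $\sigma=\Phi_s$, the maximally entangled state of Schmidt rank $s$ on an $s$-dimensional subspace $A_0\otimes B_0$, and $\rho=(1-u)\Phi_s+u\,\omega$ with $u=\min\{\delta,1-(ds)^{-1}\}$. Here $\omega$ is the uniform state on the $(ds-1)$-dimensional orthocomplement of $\Phi_s$ inside $A\otimes B_0$, where $B_0$ has dimension $s$. Then $\rho_B=\sigma_B=\id_{B_0}/s$, $H(A|B)_\sigma=-\log s$, and $H(\rho)=h_2(u)+u\log(ds-1)$, so the difference is $g_{ds}(u)$. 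The remaining point is checking $\SN(\rho)\le s$, which holds automatically since all states on $A\otimes B_0$ have Schmidt number at most $\dim B_0=s$.

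I expect the main obstacle to be making the plateau and the $\varepsilon$-regularisation rigorous with a state-dependent constant. I would sidestep the regularisation by noting that the support of $\sigma_{AB}$ lies in that of $\id\otimes\sigma_B$, and working on that support directly.
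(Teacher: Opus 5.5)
Your argument is in substance the paper's proof. It uses the same complement $\widehat{\sigma}^{(K)}_{AB}=(K\pi_A\otimes\sigma_B-\sigma_{AB})/(K-1)$ and sandwich, feeding the expansion and marginal data processing from Theorem~\ref{thm:main}. It also uses the same domination $\psi_{AB}\le s\,\id_A\otimes\psi_B$ for Schmidt rank $s$ and the same tight example. For $s=1$ that example is diagonal in a product basis, so it also covers the classical sharpness claim.

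Your plateau argument, $H(A|B)_\sigma\ge\Hmin(A|B)_{\sigma\mid\sigma}=\log d-\log K$ together with $H(A|B)_\rho\le\log d$, is only a cosmetic variant of the paper's choice $\alpha=1-K^{-1}$. With that choice, $\tau^{(K)}_{AB}=\pi_A\otimes\sigma_B$, the term $\Tr[(\rho_{AB}-\sigma_{AB})G_{AB}]$ vanishes, and what remains is exactly $D(\sigma_{AB}\|\pi_A\otimes\sigma_B)\le\log K$. Your first plateau formula has the wrong sign in front of $\Hmin$; the ``more precisely'' version is the correct one.

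The step that fails, as you finally state it, is the regularisation. Neither perturbing only $\rho_{AB}$ nor ``working on the support'' of $\sigma_{AB}$ suffices, because $\rho_{AB}$ may have weight outside $A\otimes\operatorname{supp}\sigma_B$. For example, take $\sigma_{AB}=\lvert00\rangle\!\langle00\rvert$ and $\rho_{AB}=\lvert01\rangle\!\langle01\rvert$. By the sandwich, $\operatorname{supp}\tau_{AB}=\operatorname{supp}(\id_A\otimes\sigma_B)$. So in such a case $D(\rho_{AB}\|\tau_{AB})=D(\rho_B\|\sigma_B)=+\infty$, $G_{AB}$ is undefined on the support of $\rho_{AB}$, and the expansion is meaningless.

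The remedy is the other option you mention, and it is a one-liner, as in the paper. The construction only needs the domination $\sigma_{AB}\le K\pi_A\otimes\sigma_B$, not optimality of $K$. Mixing \emph{both} states toward $\pi_A\otimes\pi_B$ preserves this domination with the same $K$:
\[
K\pi_A\otimes\sigma_{\varepsilon,B}-\sigma_{\varepsilon,AB}=(1-\varepsilon)(K\pi_A\otimes\sigma_B-\sigma_{AB})+\varepsilon(K-1)\pi_A\otimes\pi_B\ge0 .
\]
At the same time, $T(\rho_{\varepsilon,AB},\sigma_{\varepsilon,AB})=(1-\varepsilon)T(\rho_{AB},\sigma_{AB})\le\delta$. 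So run the increasing-branch argument for the strictly positive pair with $K=\kappa_\sigma$, built from the non-optimal constant $K/d$ rather than $\lambda_{\sigma_\varepsilon}$, and let $\varepsilon\downarrow0$. No continuity property of $\kappa$ is needed.
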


\begin{proof}
First suppose that
$\sigma_{AB}\le K\pi_A\otimes\sigma_B$ with $K>1$, and put
\[
 \widehat{\sigma}^{(K)}_{AB}
 \coloneqq \frac{K\pi_A\otimes\sigma_B-\sigma_{AB}}{K-1},
 \qquad
 \alpha\coloneqq \min\{\delta,1-K^{-1}\},
 \qquad
 \tau^{(K)}_{AB}\coloneqq (1-\alpha)\sigma_{AB}
                 +\alpha\widehat{\sigma}^{(K)}_{AB}.
\]
These are states with the same $B$-marginal, and
\[
 \frac{K\alpha}{K-1}\,\pi_A\otimes\sigma_B
 \le\tau^{(K)}_{AB}
 \le K(1-\alpha)\,\pi_A\otimes\sigma_B.
\]
The score identity and data-processing step in the proof of
Theorem~\ref{thm:main} therefore give
\[
 H(A|B)_\rho-H(A|B)_\sigma
 \le
 \delta\log\frac{(K-1)(1-\alpha)}{\alpha}
 -\log(1-\alpha)
 =g_K(\delta),
\]
with the evident limiting interpretation when $\alpha=0$.  The case
$K=1$ follows from $\sigma_{AB}=\pi_A\otimes\sigma_B$.  For singular
states, regularise both $\rho_{AB}$ and $\sigma_{AB}$ toward
$\pi_A\otimes\pi_B$ and let $\varepsilon\downarrow0$; domination is
preserved since
\[
 K\pi_A\otimes\sigma_{\varepsilon,B}-\sigma_{\varepsilon,AB}
 =(1-\varepsilon)(K\pi_A\otimes\sigma_B-\sigma_{AB})
  +\varepsilon(K-1)\pi_A\otimes\pi_B\ge0.
\]

Taking $K=\kappa_\sigma$ proves the first term in
\eqref{eq:state-dependent}; the second follows from
\[
 H(A|B)_\rho-H(A|B)_\sigma
 \le\log d-H(A|B)_\sigma.
\]
If a pure state $\psi_{AB}$ has Schmidt rank $s$, its Schmidt
decomposition gives
$\psi_{AB}\le s\,\id_A\otimes\psi_B$.
Convex decomposition hence yields
\[
 \omega_{AB}\le\SN(\omega_{AB})\,\id_A\otimes\omega_B
 =d\,\SN(\omega_{AB})\,\pi_A\otimes\omega_B,
\]
which proves \eqref{eq:hierarchy}.  Since $g_K(\delta)$ is nondecreasing
in $K$, applying the one-sided estimate in both directions proves the
upper bound in \eqref{eq:exact-SN}.

For sharpness, choose $s$-dimensional subspaces
$A_0\subseteq A$ and $B_0\subseteq B$, let $\Phi_{s,AB}$ be a maximally
entangled state on $A_0\otimes B_0$, put
$P_{AB}=\id_A\otimes P_{B_0}$ and $M=ds$, and set
\[
 t\coloneqq \min\{\delta,1-M^{-1}\},\qquad
 \rho_{t,AB}=(1-t)\Phi_{s,AB}
              +\frac{t}{M-1}(P_{AB}-\Phi_{s,AB}),
 \qquad
 \sigma_{AB}=\Phi_{s,AB}.
\]
Both states have Schmidt number at most $s$, the same $B$-marginal, and
$T(\rho_{t,AB},\sigma_{AB})=t$.  Consequently,
\[
 H(A|B)_{\rho_t}-H(A|B)_\sigma
 =h_2(t)+t\log(M-1)=g_M(\delta).
\]
For $s=1$ this construction is diagonal in a product basis, so it also
proves sharpness in the classical case.
\end{proof}

\section{Conclusion}

We have determined the sharp continuity modulus of quantum
conditional entropy; the proof is elementary and rests on a single
construction, the canonical complement $\widehat{\sigma}_{AB}$ of
Remark~\ref{rem:weyl}. The same mechanism
gives a fixed-marginal conditional-min-entropy refinement and the exact
Schmidt-number hierarchy connecting the classical, separable, and fully
quantum regimes.  Natural next steps are to propagate this improvement
through applications in quantum information theory and to seek sharp
moduli of continuity for quantum mutual information and quantum conditional mutual
information.  Further directions include bounds parameterized by
fidelity or purified distance, energy-constrained infinite-dimensional
versions, and a formulation in terms of quantum couplings that may
connect the classical and quantum arguments.

\section*{Acknowledgements}

Over the years, we have discussed the problem with many people, and we
thank all of them.  Most recently, Nilanjana Datta presented the question
in the open problem session of the Banff workshop on Additivity Problems
in Quantum and Classical Information Theory. We gratefully acknowledge
the hospitality of the Banff International Research Station for
Mathematical Innovation and Discovery in Banff, Alberta, Canada, during
the workshop ``Additivity Problems in Quantum and Classical Information
Theory [26w5621]'', July 12--17, 2026.
MB, PCR, GK, and JZ acknowledge support from the European
Research Council (ERC Grant Agreement No.\ 948139) and from the Excellence Cluster --
Matter and Light for Quantum Computing (ML4Q-2) and . LL acknowledges financial support from the European Union (ERC StG ETQO, Grant Agreement no.\ 101165230). 

\appendix

\section{The classical case without conditional fibres}
\label{app:classical}

For probability distributions, $T$ denotes total variation distance and
$H(X|Y)_P=H(P_{XY})-H(P_Y)$.  Alhejji and Smith proved the following sharp
bound using a conditional-fibre decomposition
\cite{AlhejjiSmith2020}.  We give a global proof without that decomposition.

\begin{proposition}[Sharp classical modulus of continuity]
\label{prop:classical}
Let $P_{XY}$ and $Q_{XY}$ be distributions with
$\lvert\mathcal X\rvert=d\ge2$ and $T(P,Q)\le\delta\le1$.  Then
\[
 \abs{H(X|Y)_P-H(X|Y)_Q}\le g_d(\delta),
\]
and the bound is optimal for every $\delta$.
\end{proposition}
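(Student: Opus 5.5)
We will transcribe the quantum argument of Theorem~\ref{thm:main}, with the comparison operator built from $K=d$ instead of $K=d^2$, directly at the level of joint distributions. The key classical input replaces the pinching inequality: for every distribution, $Q_{XY}(x,y)\le Q_Y(y)=d\,U_X(x)Q_Y(y)$, where $U_X$ is uniform. This is the classical case $\SN=1$ of the domination used in Proposition~\ref{prop:meta}. By symmetry in $P,Q$ it suffices to bound $H(X|Y)_P-H(X|Y)_Q$ from above.

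\textbf{Reduction and comparison distribution.} By mixing both $P$ and $Q$ with the uniform distribution on $\mathcal X\times\mathcal Y$ and letting the weight tend to zero, I will assume full support. If $\delta\ge1-d^{-1}$, the claim is just $0\le H(X|Y)\le\log d$. For $0<\delta\le 1-d^{-1}$, I set
\[
 \widehat Q(x,y)=\frac{Q_Y(y)-Q(x,y)}{d-1},\qquad R=(1-\delta)Q+\delta\widehat Q .
\]
This $\widehat Q$ is a distribution with $\widehat Q_Y=Q_Y$, so $R_Y=Q_Y$. Expanding gives
\[
 R=\Bigl(1-\tfrac{d\delta}{d-1}\Bigr)Q+\tfrac{\delta}{d-1}Q_Y .
\]
Combining this with $0\le Q\le Q_Y$ yields the sandwich
\[
 \tfrac{\delta}{d-1}Q_Y(y)\le R(x,y)\le(1-\delta)Q_Y(y).
\]

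\textbf{Score identity.} With $G(x,y)=-\log R(x,y)+\log Q_Y(y)$, the same direct expansion as in the theorem gives
\[
 H(X|Y)_P-H(X|Y)_Q=\sum(P-Q)G+D(Q\|R)-D(P\|R)+D(P_Y\|Q_Y).
\]
Data processing under marginalization, using $R_Y=Q_Y$, gives $D(P_Y\|Q_Y)\le D(P\|R)$. The sandwich shows that $G$ takes values in $[-\log(1-\delta),\ \log\frac{d-1}{\delta}]$. Writing $P-Q=t(\Delta_+-\Delta_-)$ with $t=T(P,Q)$ then gives
\[
 \sum(P-Q)G\le t\log\frac{(d-1)(1-\delta)}{\delta}\le\delta\log\frac{(d-1)(1-\delta)}{\delta},
\]
where the last step uses that the logarithm is nonnegative because $\delta\le1-d^{-1}$. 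Since $R\ge(1-\delta)Q$, we have $D(Q\|R)\le-\log(1-\delta)$. Summing these bounds gives $h_2(\delta)+\delta\log(d-1)=g_d(\delta)$.

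\textbf{Optimality.} Set $s=\min\{\delta,1-d^{-1}\}$. Take $Q$ to be the point mass at $(1,y_0)$, and let $P(\cdot,y_0)$ put mass $1-s$ on $x=1$ and mass $s/(d-1)$ on each other $x$. Then $T(P,Q)=s$ and the difference of conditional entropies is $h_2(s)+s\log(d-1)=g_d(\delta)$.

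\textbf{Main obstacle.} There is little genuine difficulty here. The one delicate point is the endpoint $\delta\to1-d^{-1}$, where the coefficient of $Q$ in $R$ vanishes. There $R=Q_Y U_X$, and the bounds must still hold, which they do by continuity.
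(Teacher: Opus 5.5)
Your proposal is correct and follows the paper's proof essentially verbatim: the same complement $\widehat Q=(Q_Y-Q)/(d-1)$, the same sandwich $\frac{\delta}{d-1}Q_Y\le R\le(1-\delta)Q_Y$, the same score identity with marginal data processing using $R_Y=Q_Y$, and the same point-mass sharpness example. The only differences are cosmetic. You dispose of the plateau $\delta\ge1-d^{-1}$ via $0\le H(X|Y)\le\log d$, as in Theorem~\ref{thm:main}, whereas the paper takes $\alpha=\min\{\delta,1-d^{-1}\}$ in the construction. You also need no continuity argument at $\delta=1-d^{-1}$, since both sides of the sandwich there equal $Q_Y/d$ exactly.
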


\begin{proof}
It is enough to prove the signed bound; zero probabilities follow by
continuity.  Put
\[
 U_Q(x,y)\coloneqq \frac{Q_Y(y)}d,\qquad
 \widehat Q\coloneqq \frac{dU_Q-Q}{d-1},\qquad
 \alpha\coloneqq \min\{\delta,1-d^{-1}\},
 \qquad R\coloneqq (1-\alpha)Q+\alpha\widehat Q.
\]
Since $Q\le dU_Q$, the distribution $\widehat Q$ is nonnegative and has
marginal $Q_Y$.  Moreover,
\[
 \frac{d\alpha}{d-1}U_Q\le R\le d(1-\alpha)U_Q.
\]
For $g(x,y)\coloneqq -\log R(x,y)+\log Q_Y(y)$, the same relative-entropy
identity and marginal data processing as in Theorem~\ref{thm:main} give
\[
 H(X|Y)_P-H(X|Y)_Q
 \le\sum_{x,y}(P-Q)(x,y)g(x,y)+D(Q\|R).
\]
The pointwise sandwich makes the logarithm below nonnegative; together
with $T(P,Q)\le\delta$, total-variation duality, and
$R\ge(1-\alpha)Q$, it yields
\[
 H(X|Y)_P-H(X|Y)_Q
 \le\delta\log\frac{(d-1)(1-\alpha)}{\alpha}
       -\log(1-\alpha)
 =g_d(\delta).
\]
The case $\delta=0$ is immediate, and exchanging $P,Q$ gives the absolute
value.  For sharpness, take $Y$ deterministic, let $Q_X$ be a point mass,
and, with $t=\min\{\delta,1-d^{-1}\}$, let $P_X$ assign mass $1-t$ to
that point and $t/(d-1)$ to every other point.  Then $T(P,Q)=t$ and
$H(X|Y)_P-H(X|Y)_Q=g_d(\delta)$.
\end{proof}

\section{Winter's proof}
\label{app:winter}
 
Writing $t=T(\rho_{AB},\sigma_{AB})$, Winter's strengthened
Alicki--Fannes bound is
\cite{Winter2016}
\[
 \abs{H(A|B)_\rho-H(A|B)_\sigma}
 \le2t\log d+(1+t)h_2\!\left(\frac{t}{1+t}\right).
\]
Winter's proof writes $\rho_{AB}-\sigma_{AB}
=t(\Delta_{+,AB}-\Delta_{-,AB})$ via the Jordan decomposition and
treats the two states symmetrically, i.e., each is mixed toward the common
state
\[
 \omega_{AB}
 =\frac{\sigma_{AB}+t\Delta_{+,AB}}{1+t}
 =\frac{\rho_{AB}+t\Delta_{-,AB}}{1+t}.
\]
The bound then follows from approximate affinity of the
conditional entropy together with $-\log d\le H(A|B)\le\log d$.  This
route incurs two losses.  First, the comparison does not exclude the
direction of $\sigma_{AB}$ itself, so the constant degrades from
$\log(d^2-1)$ to $\log d^2$.  Second, the two mixing defects---at
$\sigma_{AB}$ and at $\rho_{AB}$---are bounded separately, which
produces $(1+t)h_2(t/(1+t))$ in place of $h_2(t)$.
 
The present proof breaks this symmetry and builds the comparison
state from $\sigma_{AB}$ alone:
\[
 \widehat{\sigma}_{AB}
 =\frac{d\,\id_A\otimes\sigma_B-\sigma_{AB}}{d^2-1},
 \qquad
 \tau_{\alpha,AB}
 =(1-\alpha)\sigma_{AB}+\alpha\widehat{\sigma}_{AB},
 \qquad
 \alpha=\min\{t,1-d^{-2}\}.
\]
Thus $\rho_{AB}$ enters only through the trace-distance
hypothesis, and by construction $\tau_{\alpha,B}=\sigma_B$.  The
expansion at the start of the proof of Theorem~\ref{thm:main} then
converts the nonlinear remainder into a difference of relative
entropies whose sign is controlled by data processing, while the
two-sided comparison of $\tau_{\alpha,AB}$ with $\id_A\otimes\sigma_B$
excludes precisely the direction of $\sigma_{AB}$ and thereby retains
the constant $d^2-1$.  On the increasing branch, the choice $\alpha=t$
yields exactly $h_2(t)$; at the endpoint $\alpha=1-d^{-2}$, the
operator interval containing $G_{AB}$ shrinks to a point, which
produces the plateau.

\end{document}